\DeclareMathOperator*{\argmin}{argmin}
\newcommand*{\argminl}{\argmin\limits}
\newcommand{\pkcssharp}{%
  {\settoheight{\dimen0}{PKCS}PKCS\kern-.05em \resizebox{!}{\dimen0}{\raisebox{\depth}{\#}}}}
\title{Some observations on the optimization of a parallel SHAKE function using Sakura}
\author{Kevin Atighehchi}
\institute{Aix Marseille Univ, CNRS, LIF, Marseille, France\\
\email{kevin.atighehchi@univ-amu.fr}}
\begin{document}

\maketitle


\begin{abstract}
Some parallel constructions of a SHAKE hash function using Sakura coding are introduced, whose basic operation is the Keccak's permutation.
For each proposed tree-based algorithm, observations are made on both its parallel running time (depth) and the required number of processors to
reach it. This preliminary work 
makes the assumption
that the tree-level chaining value length is equal to the capacity of the underlying sponge construction, 
as recommended in the Sakura paper. 
\end{abstract}

\keywords{SHA-3, Hash functions, Sakura, Keccak, SHAKE, Parallel algorithms, Merkle trees}

\section{Introduction}

Historically, a mode of operation for hashing 
is applied to an underlying compression function having 
a fixed input length in order to process messages of arbitrary length.
However, it can also be applied
to a sequential (and variable input length) hash function merely for the addition of efficiency properties.
To take advantage of parallel architectures, Merkle and Damg{\aa}rd \cite{Dam90,Mer80} have proposed 
tree-based hash modes which allow parallel evaluations of the underlying function on several parts of a message.
To conform with the terminology of Bertoni
\emph{et al.} \cite{BDPV14_Sak,BDPV14_Suf}, we use the term \emph{inner function} for the underlying 
function, denoted $f$,
and the term \emph{outer function} for the function resulting from the application of the mode to $f$.

A mode of operation for hashing \cite{BDPV14_Sak} defines how the message is splitted into substrings and how these latter
are operated using the inner function $f$ and other simple operations. It can be seen as a composition method which 
specifies how the inputs for $f$ are formatted
by the concatenation of these substrings with \emph{frame bits} and possibly one or more chaining values.
Throughout the paper, we use the conventions \cite{BDPV14_Sak,BDPV14_Suf} that a node is a
formatted input to $f$ and that a chaining value is an image by~$f$.

Some SHA-3 candidates have proposed tree hash modes, in particular Skein~\cite{FLSWBKCW09}, MD6~\cite{RABCDEKKLRSSSTY08} and Keccak \cite{BDPV13_keccak}.
A tree mode is also proposed for an 
improved version of Blake \cite{ANWW13} and yet others are investigated specifically for efficient software parallel implementations~\cite{Gue14,AB16,BDPV14_Suf}.
There is still a debate \cite{luk13,kel14} about the way of standardizing tree hash modes. 

Bertoni \textit{et al.} \cite{BDPV14_Suf} give sufficient conditions for a tree based hash function to ensure its indifferentiability
from a random oracle. 
They define the Sakura coding \cite{BDPV14_Sak} which ensures these conditions, and allows any hash algorithm using it to be indifferentiable from
a random oracle, automatically. They also propose to use another tree representation, called hop tree, that we will use throughout the paper.

We address the problem of finding an optimized tree structure for parallel hashing when the nodes are formatted according to Sakura coding. 
Using \nobreak{RawSHAKE} \cite{fips202} as inner function and assuming an unbounded number of processing units, the aim is to construct a SHAKE function that 
decreases the number of parallel steps, where the running time of one step corresponds to the one of the Keccak's permutation. 
We consider that the tree-level chaining value length is equal to the capacity of Keccak, as recommended in \cite{BDPV14_Sak}.
Such a work then focuses
on the optimization in depth and width of a hashing circuit (abstracted with the terminology of parallel computing). This has an interest both 
theoritically and for hardware implementations, especially when the messages to hash are of small size.

The remainder of this paper is organized as follows. Section \ref{prel} contains background information
regarding the inner function RawSHAKE, Sakura coding and the hop tree representation. We will see that a tree of hops is mapped to a tree of nodes.
Section \ref{opti} discusses the optimization of hop trees for minimizing both the parallel running time to process the corresponding tree of nodes and
the number of involved processors.


\section{Preliminaries}\label{prel}

In this section, we assume that the Keccak hash function family \cite{fips202} has 
a state size of $1600$ bits. The rate $r$ of Keccak is then equal to $1600-c$,
where $c$ is its capacity. The Keccak family is simply
denoted $\textrm{Keccak}[c]$. 

Keccak is an algorithm which consists of two phases: the absorbing phase and the squeezing phase \cite{fips202}. During these phases, 
a permutation is iterated on the hash state a number of times which depends on the message size and the digest size. This numer determines
the efficiency of the algorithm.
Throughout the paper, we will assume that the basic operation is one evaluation of the underlying permutation. One unit of time will then refer to
one evaluation of this permutation.

\subsection{RawSHAKE256}

According to Bertoni \emph{et al.} \cite{BDPV14_Sak} and the FIPS 202 standard \cite{fips202}, the
intermediate function RawSHAKE128/256 should be used as inner function $f$ for tree hashing using Sakura. We recall that  
$\textrm{RawSHAKE256}(M) = \textrm{Keccak}[c = 512](M\|11)$, where 11 is the domain separation
suffix of the extendable-output functions. Let us suppose that $f(M)=\textrm{RawSHAKE256}(M)$. 
If $l$ is the bit-length of $M$, the number of iterations of the permutation during the \emph{absorption} phase
is $\left\lceil \frac{l+4}{r} \right\rceil$. The constant $4$ refers to both the \emph{multi-rate} padding of Keccak 
and the domain seperation suffix.
If we denote $d$ the chosen bit-length for the digest of $f$, 
then the total running time of $f$ is 
$$\left\lceil \frac{l+4}{r} \right\rceil + \left\lfloor \frac{d}{r} \right\rfloor.$$

\subsection{Sakura coding}

After having specified a set of conditions
for a tree hash mode to be indifferentiable from a random oracle \cite{BDPV14_Suf}, Bertoni \emph{et al.} proposed 
a tree hash coding meeting these conditions, called Sakura \cite{BDPV14_Sak}. 
This tree coding, whose syntax is specified using the Augmented Backus-Naur Form (ABNF), 
ensures that any tree hash mode compliant with it is sound, \emph{i.e.} that no weaknesses is introduced 
when using the inner function. 

Since the model of the tree using \emph{nodes} is too much general, certain trees of \emph{nodes} cannot be easily represented. 
In Sakura, the notion of \emph{hops} is used for the representation of trees, and a node ($f$-input) contain one or several hops. 
It then allows the encoding of a tree of hops into a tree of nodes. There exist two types of hops: \emph{message hops} containing only message bits
and \emph{chaining hops} containing only chaining values. The notion of \emph{kangaroo hopping} makes it possible to encode several hops into a node,
with the particularity that the first one is a message hop and the followings are chaining hops. Each chaining value that results from kangaroo hopping is called
a kangaroo hop.


\begin{figure}
$\left< final\ node \right> ::=\ \left< node \right>\ '1'$\\
$\left< inner\ node \right> ::= \left< node \right>\ \left<padSimple \right>\ '0'$\\
$\left< node \right> ::= \left< message\ hop \right>\ |\ \left< chaining\ hop \right>\ |\ \left< kangaroo\ hopping \right>$\\
$\left< kangaroo\ hopping \right> ::= \left< node \right> \left< padSimple \right> \left< chaining\ hop \right>$\\
$\left< message\ hop \right> ::= \left< message\ bit\ string \right>\ '1'$\\
$\left< message\ bit string \right> ::=\ ''\ |\ \left< message\ bit\ string \right>\ \textrm{MESSAGE\_BIT}$\\
$\left< chaining\ hop \right> ::= \textrm{nrCVs} \left< CV \right>\ \left< coded\ nrCVs \right>\ \left< interleaving\ block\ size \right>\ '0'$\\
$\left< CV \right> ::= n\textrm{CHAINING\_BIT}$\\
...\\
$\left< padSimple \right> ::=\ '1'\ |\ \left< padSimple \right>\ '0'$
\caption{Some production rules of Sakura tree coding}
\label{ABNF_rules}
\end{figure}

The hops form a tree whose root is the final hop. A tree of hops uniquely determines a tree of nodes. In a hop tree, each hop (except the final hop)
has an outgoing edge. A message hop has no incoming edges. The degree of a chaining hop corresponds to the number of its incoming edges, and
the hops at the other end of these edges are its child hops. The indexing of a hop
is defined in a recursive way: starting from the final hop of index the empty sequence (denoted~$*$), the 
$i$-th child of a hop with index $\alpha$ has index $\alpha \| i - 1$.

In a tree of nodes ($f$-inputs), there are also two types of nodes: \emph{inner nodes} and \emph{final nodes}. 
The result of $f$ applied to the final node is the output of the outer (constructed) hash function. The last hop in the final node (of 
a tree of nodes) corresponds to the final hop (in the tree of hops).
Figure \ref{ABNF_rules} gives a 
sample of the Sakura rules to format the nodes 
with hops and \emph{frame bits}. We remark that kangaroo hopping can be applied recursively to produce multiple chaining hops.
The $\left< coded\ nrCVs \right>$ production rule encodes the number of chaining values
and requires $\lfloor \log_{256}(n_{cv}) \rfloor + 2$ bytes. 
The \emph{interleaving block size} is coded using two bytes and is not presented in detail since it is of no interest in the present paper.
We refer the reader to the Sakura paper~\cite{BDPV14_Sak} for more information.

The illustrations used in this paper correspond to trees of hops following roughly the same conventions of Bertoni \emph{et al.}~\cite{BDPV14_Sak}.
Message hops have sharp corners, chaining hops have rounded corners. The final hop has a grey fill, the others a white fill. 
An edge between child and parent has an arrow and enters the parent from below if the chaining value obtained by applying $f$ 
to the child hop is in the parent hop. In the case of kangaroo hopping, it has a diamond and enters the parent hop from the left . 
Hops on the same horizontal line are in the same node.

\subsection{Costs of nodes}

Let $l$ the size of the message in a \emph{message hop}, $d$ the size of a chaining value, and $n_{cv}$ the number of chaining values in a \emph{chaining hop}. 
Assuming that no extra bits 0 are used, the appending/prepending of frame bits using Sakura coding yields the following node sizes:

\noindent
Inner node with only message hop: $l+3$.

\noindent
Inner node with chaining hop: $n_{cv}d + (\lfloor \log_{256}(n_{cv}) \rfloor + 1)8+27$.

\noindent
Inner node with kangaroo hop (message hop + chaining hop): $l+2+n_{cv}d+(\lfloor \log_{256}(n_{cv}) \rfloor + 1)8+27$.

\noindent
Final node with only message hop: $l+2$.

\noindent
Final node with chaining hop: $n_{cv}d + (\lfloor \log_{256}(n_{cv}) \rfloor + 1)8+26$.

\noindent
Final node with kangaroo hop (message hop + chaining hop): $l+2+n_{cv}d+(\lfloor \log_{256}(n_{cv}) \rfloor + 1)8+26$.~\\

\paragraph{\textbf{Some redefinitions.}} Note that the bits due to the \emph{multi-rate} padding of Keccak and the domain separation suffix of RawSHAKE have not been counted. Indeed, the definition of a node
by Sakura does not take them into account. In the rest of the paper,
we consider that these bits are an integral part of a node, and, as a consequence, 
we have to consider that $f$ (\emph{i.e.} RawSHAKE) is devoid of both the multi-rate padding bits and the domain separation suffix (\emph{i.e.} $10^*111$). 
Since our purpose is to increase the number of message bits processed by the underlying permutation of Keccak, it is normal to
reduce the number of frame bits as much as possible (\emph{i.e.}, the extra bits 0 are removed). For the same reasons, by using our redefinitions, we 
can add the constraint that $f$ requires an input of length $1088k$ bits, where $k$ is a strictly positive integer.

\section{A specification of SHAKE256 via an optimized circuit}\label{opti}

To construct an optimized circuit for SHAKE256, we first proceed in two phases:
\begin{enumerate}
 \item We seek a hop subtree having a single kangaroo hop that maximizes the number of message bits encapsulated, and 
 whose the mapped tree of nodes can be processed in parallel in a fixed and small amount of time (for instance 2 units of time).
 In this phase, we are interested in the parallel running times that are optimal for the number of message bits processed. Thus, considering
 the criteria of a single kangaroo hop, we will see that there exists a parallel running time threshold above which these hop subtrees 
 are no longer optimal.
 \item The resulting root hops are connected via another tree of (kangaroo) hops. The inner function being RawSHAKE256, we will see in the subsection 
 \ref{multiple_chaining_hops_per_node} that a ternary hop tree is the most appropriate.
\end{enumerate}
We shall see in the subsection \ref{one_chaining_hop_per_node} that the resulting tree of nodes (mapped from the hops) 
is processed in parallel via RawSHAKE with a near optimal running time. Thereafter,
we will see how this tree can be further optimized.

\subsection{Maximizing the bitrate during a small and fixed amount of time}

With the assumption that no extra bits 0 are used for padding, we seek hop trees of height 1 which use a single chaining hop and 
maximize the bitrate during a small and fixed amount of time. In order to increase the bitrate, we necessarily need to use kangaroo hopping.
This single chaining hop is then called a kangaroo hop.
We have to consider whether or not the kangaroo hop is the final hop, 
or, in other words, whether or not the kangaroo hop is in the final node.
We first look at the amount of data that can be processed in 2 units of time:

\subsubsection{Case of an inner node.} If it is processed via $f$ in one unit of time, this node 
contains at most 1081 message bits using a message hop (without kangaroo hopping). 
If it is processed in 2 units of time, this node makes use of kangaroo hopping in the following way:
it contains a message hop with at most 1111 message bits, followed by a chaining hop 
having 2 chaining values of size $c$ bits, with $c=512$. Consequently, if 3 processors are used to compute in parallel 3 nodes, 
it is possible to process 3273 bits of the message (2162+1111) in 2 units of time.

\subsubsection{Case of a final node.}
If it is processed via $f$ in 2 units of time, this node 
contains a message hop having at most 1112 message bits, followed (using kangaroo hopping) by a chaining hop that 
contains 2 chaining values of size $c$ bits, with $c=512$. Consequently, if 3 processors are used to compute in parallel 3 nodes, 
it is possible to process a message of 3274 bits (2162+1112)
in 2 units of time. The corresponding hop tree is depicted Figure \ref{small_hop_tree_2_units}.

\begin{figure}[!h]
 \centering
 \scalebox{0.4}{
 \input{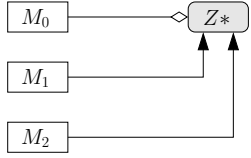}}
 \caption{\textbf{Small hop tree with a single kangaroo hop.} The message hop $M_0$ contains 1112 message bits, while $M_1$ and $M_2$ each contain
 1081 message bits. The chaining (kangaroo) hop is the final hop of the tree.}
 \label{small_hop_tree_2_units}
\end{figure}

Note that this small hop tree will be used in the following section to construct a hop tree covering a message of arbitrary length.
In the mean time, it is interesting to have an idea of the amount of data that can be processed in 3 units of time and using a single kangaroo hop.
The following example is given:

\subsubsection{Case of an inner node.} 
If it is processed via $f$ in 2 units of time, this node 
contains at most 2169 message bits using a message hop (without kangaroo hopping). 
If it is processed in 3 units of time, this node makes use of kangaroo hopping in the following way:
it contains a message hop with at most 2199 message bits, followed by a chaining hop 
having 2 chaining values of size $c$ bits, with $c=512$. Consequently, if 3 processors are used to compute in parallel 3 nodes, 
it is possible to process 6537 bits of the message (2199+4338) in 3 units of time.

\subsubsection{Case of a final node.}
If it is processed via $f$ in 3 units of time, this node 
contains a message hop having at most 2200 message bits, followed (using kangaroo hopping) by a chaining hop that 
contains 2 chaining values of size $c$ bits, with $c=512$. Consequently, if 3 processors are used to compute in parallel 3 nodes, 
it is possible to process a message of 6538 bits (2200+4338)
in 3 units of time.~\\

The features of the possible hop subtrees using a single chaining hop (kangaroo hop) are given in Table \ref{small_hop_trees_charac}. 
Only hop subtrees satisfying an optimal running time for the amount of data processed are considered in this table.
In particular, it will become clear in the next section that those that can be processed by $f$ in more than 4 units of time 
are suboptimal (in the sense of the parallel running time). If less than 2169 bits have to be processed, a single message hop (called model 0) should be used. 
If a message of at most 2170 bits have to be processed, a single final message hop should be used\footnote{A ``final'' hop can process one more bit.}.
The resulting node is then processed in at most 2 units of time using a single processor.~\\

\begin{table}[!h]
\centering
\resizebox{11cm}{!}{
 \begin{tabular}{||>{\centering\arraybackslash}p{1.1cm}||>{\centering\arraybackslash}p{2cm}|>{\centering\arraybackslash}p{1.6cm}|>{\centering\arraybackslash}p{1.2cm}|>{\arraybackslash}p{5.5cm}||} 
   \hline
    Hop subtree (model)& Number of message bits encapsulated & Number of processors (nodes) & Parallel running time & Short description (distribution of the message bits among child hops)\\
    \hline
    \hline
    1  & 2704 & 2 & 2 &  1st child of the kangaroo hop: 1623 bits\newline 2nd child hop: 1081 bits\\
    \hline
    2 & 3273 & 3 & 2 & 1st child of the kangaroo hop: 1111 bits\newline 2nd and 3rd child hop: 1081 bits \\
    \hline
    3 & 4880 & 2 & 3 & 1st child of the kangaroo hop: 2711 bits\newline 2nd child hop: 2169 bits\\
    \hline
    4 & 6537 & 3 & 3 & 1st child of the kangaroo hop: 2199 bits\newline 2nd and 3rd child hop: 2169 bits\\
    \hline
    5 & 7106 & 4 & 3 & 1st child of the kangaroo hop: 1687 bits\newline 2nd child hop: 1081 bits\newline 3rd and 4th child hop: 2169 bits\\
    \hline
    6 & 7675 & 5 & 3 & 1st child of the kangaroo hop: 1175 bits\newline 2nd and 3rd child hop: 1081 bits\newline 4th and 5th child hop: 2169 bits\\
    \hline
    7 & 11458 & 4 & 4 & 1st child of the kangaroo hop: 2775 bits\newline 2nd child hop: 2169 bits\newline 3rd and 4th child hop: 3257 bits\\
    \hline
    8 & 13115 & 5 & 4 & 1st child of the kangaroo hop: 2263 bits\newline 2nd and 3rd child hop: 2169 bits\newline 4th and 5th child hop: 3257 bits\\
    \hline
    9 & 13684 & 6 & 4 & 1st child of the kangaroo hop: 1751 bits\newline 2nd child hop: 1081 bits\newline 3rd and 4th child hop: 2169 bits\newline 5th child hop: 3257 bits\\
    \hline
    10 & 14253 & 7 & 4 & 1st child of the kangaroo hop: 1239 bits\newline 2nd  and 3rd child hop: 1081 bits\newline 4th and 5th child hop: 2169 bits\newline 6th and 7th child hop: 3257 bits\\
    \hline
\end{tabular}
}
\vspace{0.2cm}\\
\caption{Characteristics of small and optimal hop subtrees using a single chaining (kangaroo)~hop. Note that the same hop trees having the kangaroo hop as final hop can process
one more message bit, while conserving the other characteristics.}
\label{small_hop_trees_charac}
\end{table}

We 
remark in Table \ref{small_hop_trees_charac} that the hop subtrees of running time 3 or 4 could be replaced by a hop subtree 
obtained by the composition of a certain number of times of the second hop subtree. The running time of the resulting hop subtree remains the same.
For instance, a substitute for the sixth hop subtree (whose the mapped tree of nodes is depicted in Figure \ref{single_kangaroo_hop_model_6}) 
is the following: we divide the message of 7675 bits into parts of 3273 bits, the 
last part being of size 1129. We build a hop subtree for the first part (according to the second model of the table). We do the same for 
the second part and we denote $CV1$ the chaining value obtained when applying $f$ on its longest node. The last part is encapsulated in a single message hop, and
we denote $CV2$ the chaining value obtained when applying $f$ on the corresponding node. We then add a new kangaroo hop in the longest node of the first subtree 
that contains both $CV1$ and $CV2$. Since this kangaroo hop fits completely into the rate width (1088 bits), it is processed in one unit of time. 
If it is possible to obtain an optimal running time by a composition of subtrees in this way, we can ask why we consider the other models. 
This is because they reduce the number of involved processors. For instance, in the substitute that we propose, we need 6 processors, against 5 for the original one.

\begin{figure}[!h]
 \centering
 \scalebox{0.65}{
 \input{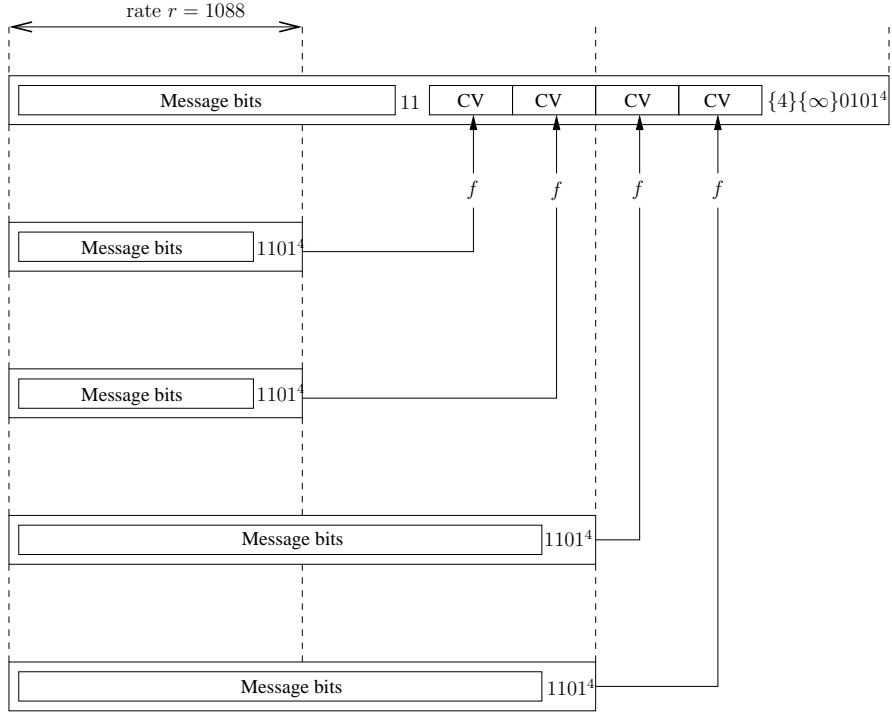}}
 \caption{\textbf{Subtree of nodes mapped from Model 6.} We recall that the representation of nodes is slightly changed. Nodes are expanded 
 with both the multi-rate padding bits of Keccak and the 
 domain separation suffix bits of RawSHAKE256 (these bits, devoid of extra bits $0$, are denoted $1^4$). Each node is processed by the redefined inner function $f$. 
 The chaining values are of length 512 bits and their corresponding boxes are not drawn to scale. In accordance with Sakura, the notation $\{4\}$ represents 
 the two bytes that encode the number of chaining values, and $\{\infty\}$ represents the two bytes that encode the absence of interleaving.}
 \label{single_kangaroo_hop_model_6}
\end{figure}

\paragraph{Recipe for maximizing the number of message bits using a single kangaroo hop, by considering that this kangaroo hop is inside a node of length $1088k$, 
where $k$ is a strictly positive integer.} We count the number of chaining values that can take place inside the last $(k-1)1088$ bits of this node (not forgetting 
the last frame bits $\{n_{cv}\}\{\infty\}0101^4$).
This number is denoted $n_{cv}$. Looking through these $n_{cv}$ chaining values from left-to-right inside this node, we should have the following:
all the chaining values that have bits in the $i$\emph{-th} chunk of 1088 bits should correspond to $f$-image of nodes of size $(i-1)1088$ bits. These nodes
of size $(i-1)1088$ are mapped from message hops, and the number of message bits that can be encapsulated in such a hop is $(i-1)1088-7$. The unoccupied space at the 
begginning of the longest node (\emph{i.e.} at the left side of the first chaining value) serves for the first message hop and the frame bits $11$.

\subsection{Hop trees allowing multiple chaining hops per node}\label{multiple_chaining_hops_per_node}

The first proposed solution, in order to process a message of arbitrary length, is to use the second hop subtree of Table \ref{small_hop_trees_charac}
as many times as needed. 
Note that a kangaroo hop having 3 child hops can be processed via $f$ in one unit of time. Indeed, the chaining values and mandated frame bits it contains 
can fit into 1088 bits. 
We recall that its first child is a message hop or a chaining hop, while the last 
two others are chaining hops.
We thus compose this subtree according to a ternary hop tree\footnote{A hop tree whose each hop (other than the leaf hops) is a kangaroo hop having 
3 child hops.} (see the example depicted Figure \ref{ternaly_hop_tree_1}): the message is divided in parts of 3273 bits, with 
the latter which may be smaller. Each part is encapsulated in a hop subtree using the model 2 of Table \ref{small_hop_trees_charac}, except the last part which,
depending on its size, can use a 
model $j \leq 2$. Thus, we have constructed a certain number of hop subtrees whose number is denoted $p$. A ternary hop tree of height 
$\lceil \log_3 p \rceil$ is then constructed on top of the resulting root hops.
Apart from the first kangaroo hop of a node, each new kangaroo hop has to be aligned according to the rate of Keccak[512]. This ensures that it is entirely processed 
in one evaluation of the underlying permutation\footnote{Padding bits are used to fill the rate of 1088 bits so that each new kangaroo hop inside a node is processed
by a distinct call to the permutation.}.

\begin{theorem}\label{theor1}
Let $n$ be the bit-size of a message. There exists a hop tree for this message
such that all the nodes (mapped from the corresponding hops) can be processed in parallel in $\left\lceil \log_3(\frac{n}{3273}) \right\rceil + 2$ units of time,
using at most $3 \left\lceil \frac{n}{3273} \right\rceil$ processors.
\end{theorem}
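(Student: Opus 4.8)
The plan is to show that the two-layer hop tree sketched just above the statement meets the bound, analysing its depth and its width separately, and then to simplify the resulting ceiling expression.

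\textbf{The construction.} First I would partition the $n$-bit message into $p := \lceil n/3273 \rceil$ consecutive blocks, the first $p-1$ of exactly $3273$ bits and the last of at most $3273$ bits. Each of the first $p-1$ blocks is encapsulated in a copy of the model-$2$ hop subtree of Table~\ref{small_hop_trees_charac}, and the last block in a model-$j$ subtree with $j \le 2$ chosen according to its length (in particular a single final message hop if it has at most $2170$ bits). By Table~\ref{small_hop_trees_charac} each of these $p$ subtrees maps to a tree of nodes that is processed by $f$ in at most $2$ units of time using at most $3$ processors, and yields one root hop, i.e. one chaining value. I would then connect these $p$ root hops with a ternary hop tree of height $h := \lceil \log_3 p \rceil$: at each level the still-uncombined hops are grouped three at a time (one group possibly smaller), and each group becomes the child set of a new kangaroo hop, inserted by kangaroo hopping, aligned to the rate $1088$, into one of the already existing nodes that produced one of its children; the topmost kangaroo hop is declared the final hop, so that $f$ applied to the corresponding final node is the output.

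\textbf{Depth.} I would argue along the critical path. All $p$ bottom subtrees run simultaneously, so after $2$ units every root hop of the bottom layer is available. By the fact recalled just before the theorem, a kangaroo hop with at most $3$ child hops, together with its $\{n_{cv}\}\{\infty\}$ bytes and the trailing frame bits, fits inside one $1088$-bit rate block once aligned to the rate, hence appending such a kangaroo hop to a node costs exactly one extra evaluation of the permutation; moreover a level-$\ell$ kangaroo hop depends only on chaining values produced at level $\ell-1$. An induction on $\ell$ then shows that all chaining values at level $\ell$ of the ternary tree are available after $2+\ell$ units, so the output, at level $h$, is available after $2+h$ units. Thus the parallel running time is $2 + \lceil \log_3 p \rceil = 2 + \lceil \log_3 \lceil n/3273\rceil \rceil$. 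To turn this into the stated form I would invoke the elementary identity $\lceil \log_3 \lceil x \rceil \rceil = \lceil \log_3 x \rceil$ for every real $x>0$: writing $m = \lceil \log_3 x\rceil$ we have $3^{m-1} < x \le 3^m$, and since $3^{m-1}$ and $3^m$ are integers this forces $3^{m-1} < \lceil x\rceil \le 3^m$, i.e. $\lceil \log_3\lceil x\rceil\rceil = m$; applied with $x = n/3273$ it gives the running time $\lceil \log_3(n/3273)\rceil + 2$.

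\textbf{Width.} The bottom layer uses at most $3$ processors per subtree over $p = \lceil n/3273\rceil$ subtrees, hence at most $3\lceil n/3273\rceil$ processors. The ternary tree on top introduces no new nodes: each of its kangaroo hops lives inside a node already counted in the bottom layer, so it reuses one of the processors handling that subtree and no extra processor is needed. Hence the total never exceeds $3\lceil n/3273\rceil$, which closes the argument. The delicate point, and the only place where the concrete Sakura node-size formulas of Section~\ref{prel} really intervene, is the claim underlying the depth count: that a kangaroo hop of the upper ternary tree really occupies at most one rate block, and that appending successive kangaroo hops to the same host node across several levels only lengthens that node's absorption phase additively (one block per level) without ever forcing it to be split among processors. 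Everything else — counting $p$, tracking the critical path, and the $\log_3$ identity — is routine.
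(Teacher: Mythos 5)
Your proposal is correct and follows essentially the same route as the paper: partition the message into $3273$-bit blocks handled by model-2 subtrees in $2$ units of time, stack a ternary tree of rate-aligned kangaroo hops (each costing one permutation call and living inside an already-existing node) on top, and count depth $2+\lceil\log_3 p\rceil$ and width $3p$ with $p=\lceil n/3273\rceil$. You are merely more explicit than the paper about the critical-path induction and the identity $\lceil\log_3\lceil x\rceil\rceil=\lceil\log_3 x\rceil$, which the paper glosses over by directly seeking the least $j$ with $3^j\cdot 3273\ge n$.
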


\begin{proof}
 Using the model 2 of hop subtree, we can process 3273 bits in 2 units of time. Moreover, using a new kangaroo hop, it is clear that we can process 
 three times as many data in one more unit of time. Thus, we seek the lowest $j$ such that $3^j \times 3273 \geq n$. The final node then contains, in addition to the first kangaroo hop
 contained in a hop subtree of height 1 (model 2), $\lceil \log(\frac{n}{3273}) \rceil$ kangaroo hops. Consequently, the computation of the digest by applying $f$ on 
 the final node requires $\lceil \log(\frac{n}{3273}) \rceil+2$ evaluations of the permutation. The number of involved processors corresponds to the number of nodes,
 \emph{i.e.} at most $3 \left\lceil \frac{n}{3273} \right\rceil$.
\end{proof}

\begin{figure}[!h]
 \centering
 \scalebox{0.28}{
 \input{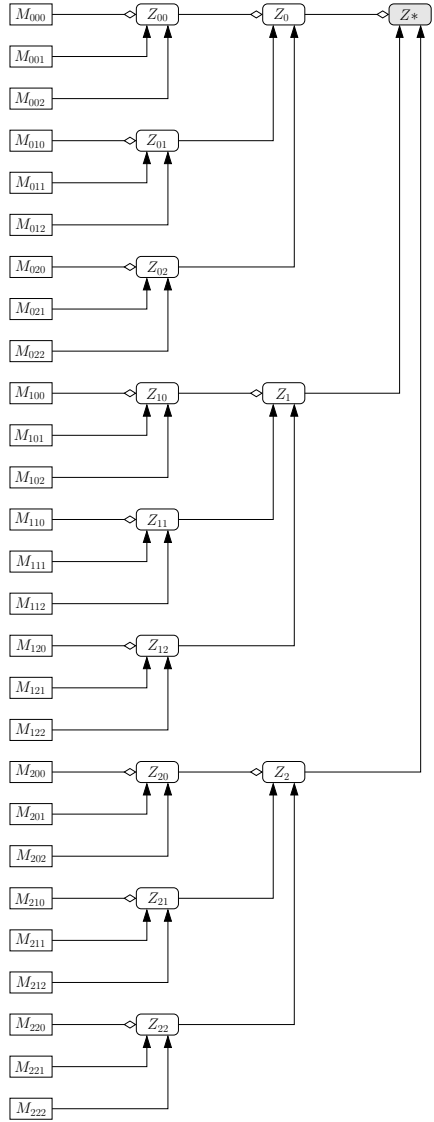}}
 \caption{\textbf{Ternary hop tree for a message of 29457 bits.} The message is divided in parts of 3273 bits and each part is encapsulated using the model 2 of hop subtree. The 9
 corresponding root hops are the leaf hops of a perfect ternary tree of hops. In this example, it appears that 
 the entire hop tree is a perfect ternary hop tree.}
 \label{ternaly_hop_tree_1}
\end{figure}

\newpage

\noindent
For accessing the information of Table \ref{small_hop_trees_charac}, let us denote the following functions:
\begin{itemize}
 \item $N_{mb}(i)$ returns the number of message bits encapsulated in the tree template corresponding to the model $i$ of hop subtree;
 \item $N_{p}(i)$ returns the number of processors (or nodes) required by the model $i$;
 \item $T(i)$ returns the parallel running time to process the instantiated subtree of nodes corresponding to the model $i$ of hop subtree;
\end{itemize}

Let $n$ the bit-size of a message such that $n \geq 3275$. To decrease the number of processors while satisfying a parallel running time of 
$t:=\left\lceil \log_3(\frac{n}{3273}) \right\rceil + 2$, we propose the
following algorithm that returns an appropriate hop tree:
\begin{enumerate}
 \item We denote $S$ the set of indices $i$ that are such that $\left\lceil \log_3\left(\frac{n}{N_{mb}(i)}\right) \right\rceil + T(i)$ is equal to $t$.
 \item We compute the index $i^*:=\argminl_{i \in S} \left\{\left\lceil \frac{n}{N_{mb}(i)} \right\rceil N_{p}(i) \right\}$. \emph{The model $i^*$ of 
 hop subtree is a good
 candidate for reducing the number of processors}.
 \item We divide the message in parts of size $N_{mb}(i^*)$ bits (except the latter which may be of smaller size). Each part is encapsulated using the $i^*$-th hop subtree.
 The $\left\lceil\frac{n}{N_{mb}(i^*)}\right\rceil$ root hops are the leaf hops of a ternary tree. This ternary hop tree is such that each kangaroo hop has 2 chaining values
 and fits entirely into the rate width.
\end{enumerate}

\subsection{Hop trees with at most one chaining hop per node}\label{one_chaining_hop_per_node}

The padding bits used to force the alignment of chaining hops to 1088-bit block boundaries can be considered as a waste. These bits could be used to process some
small parts of the message in order to help decreasing the parallel running time and/or the amount of involved resources.
In this section, we explain ho to get close to optimality by compacting all the chaining values, \emph{i.e.} by using at most one chaining hop per node.
Removing the extra bits 0 for padding complicates the problem since
ensuring the ``happens before'' relationship between the production of chaining values and their use is no longer obvious.~\\

If we look at the hop tree constructed to prove Theorem \ref{theor1} and considering the assumptions made about the padding bits, 
the resulting tree of nodes has the following characteristics:
\begin{itemize}
 \item the number of hop subtrees of height 1 is of the form $3^j$ with an integer $j\geq0$, and since their root hops have 3 leaf hops, 
 the total number of corresponding nodes is of the form $3^{j+1}$.
 \item The longest node is of length $1088(j+1)$ where $j$ is the number of kangaroo hops and $j-1=\left\lceil \log_3(\frac{n}{3273}) \right\rceil$.
 \item $2$ nodes are of length $1088j$.
 \item $2 \times 3$ nodes are of length $1088(j-1)$.
 \item $2 \times 3^2$ nodes are of length $1088(j-2)$.
 \item ...
 \item $2 \times 3^{j-2}$ nodes are of length $1088 \cdot 2$.
 \item $2^2 \times 3^{j-2}$ nodes are of length $1088$.
\end{itemize}

To improve the efficiency while ensuring this ``happens before'' relationship, a first stragegy is to modify the tree of nodes that results 
from the hop tree of the previous subsection, while satisfying the following guidelines:
\begin{enumerate}
 \item the lengths of nodes remain unchanged, \emph{i.e.} each node has a length that is a multiple of 1088 bits and the proportion of nodes of 
 length $1088k$ for a given $k$ remains unchanged.
 \item All the chaining hops of a node are transformed in a single chaining hop. This chaining hop is compacted at the end of the node, \emph{i.e.} no extra padding 
 bits 0 are used at the end. Thus, there are no wasted bits processed by the Keccak permutation. An example of such a tree is depicted Figure \ref{ternaly_hop_tree_2}.
 \item For the sake of simplicity, the number of chaining values (rule $\left< coded\ nrCVs \right>$) is coded using only 2 bytes, \emph{i.e.} we assume
 that a chaining hop cannot have more than 255 chaining values. This is a reasonable assumption in our context.
 \item As previously, no extra bits 0 are used for padding between a message hop and a chaining hop.
\end{enumerate}

\begin{theorem}\label{theor2}
Let $n$ be the bit-size of a message. There exists a tree of nodes encapsulating this message, with at most one chaining hop per node,
and which can be processed in a parallel time of at most $\left\lceil \log_3(\frac{n+31}{3305}) \right\rceil + 2$,
using at most $3 \left\lceil \frac{n+31}{3305} \right\rceil$ processors.
\end{theorem}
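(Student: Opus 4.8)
\emph{Proof plan.} The plan is to carry out verbatim the transformation announced just before the statement: take the hop tree that proves Theorem~\ref{theor1} and, inside every node, replace the chain of kangaroo hops by a single kangaroo hop whose first child is an enlarged message hop and whose unique chaining hop gathers all of that node's chaining values, compacted flush against the end of the node, every node keeping its original length (a multiple of $1088$). What one gains, per node, is the padding that used to align each extra kangaroo hop to a $1088$-bit boundary together with all but one of the repeated chaining-hop frame bits, and those freed bits are handed to the message hop of that node.

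\emph{Set-up.} Recall that the Theorem~\ref{theor1} tree is assembled from $\lceil n/3273\rceil$ copies of the model~2 subtree (a model~2 kangaroo node together with two $1081$-bit leaf message-hop nodes) whose roots form the leaves of a balanced ternary hop tree; along the left-child spine of that ternary tree all the additional kangaroo hops pile up inside one and the same node, so a model~2 kangaroo node carrying $m$ such ternary hops has length $1088(m+2)$ and holds $2(m+1)$ chaining values. I would put $t:=\lceil\log_3(\frac{n+31}{3305})\rceil+2$, so that $3^{t-3}<\frac{n+31}{3305}\le 3^{t-2}$, and let $p:=\lceil\frac{n+31}{3305}\rceil$, hence $3^{t-3}<p\le 3^{t-2}$ and the balanced ternary tree on $p$ leaves has height exactly $t-2$. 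One then builds the tree from $p$ model~2 subtrees with the leftmost leaf placed at maximal depth, applies the merging, and distributes the $n$ message bits (the last subtree possibly underfull); this yields exactly $3p$ nodes, the longest (final) one of length $1088t$.

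\emph{Happens-before relation.} Since in a length-$1088k$ node the merged chaining hop is flush against the end, a chaining value sitting in the $b$-th $1088$-bit block must be produced within $b-1$ units. Inside the merged hop I would list the chaining values by the level of the ternary hop they come from: first the two leaf children of the model~2 hop (produced by length-$1088$ nodes, ready after $1$ unit), then the two children of the first ternary hop, and so on. With $l=64t+984$ message bits plus two separating bits before the chaining values, the pair originating from the $g$-th ternary level begins at bit $(l+2)+1024g$, hence lies in blocks of index $\ge g+2$ exactly when $t\ge g+2$; as $g$ ranges up to the height $t-2$ this always holds, tightly at the top level. No further dependency is introduced, because only message bits, which have no producer, are shifted towards earlier blocks.

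\emph{Counting and conclusion.} Using the node-cost formulas of this section with the two-byte encoding of the number-of-chaining-values field, a length-$1088k$ inner kangaroo node holding $2(k-1)$ chaining values carries $64k+983$ message bits, and the length-$1088t$ final node carries $64t+984$; together with the $2p$ leaf nodes of $1081$ bits, and using that the number of ternary hops equals the number of internal nodes of the ternary tree, which is at least $\lceil(p-1)/2\rceil$, the total number of message bits that can be accommodated is at least $3273p+1+64\lceil(p-1)/2\rceil\ge 3305p-31\ge n$. (Equivalently, one may establish the recursion $G(t)=3G(t-1)+62$ with $G(2)=3274$ — three copies of the time-$(t-1)$ tree, the spine copy's final node gaining $64$ message bits as it absorbs one more ternary hop while the two other roots each lose one bit on becoming inner nodes — which solves to $G(t)=3305\cdot 3^{t-2}-31$.) Every node has length at most $1088t$, so it is processed in at most $t$ units, and there are $3p=3\lceil\frac{n+31}{3305}\rceil$ of them, which is the claim. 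I expect the delicate point to be the happens-before check: compacting the chaining hops postpones every chaining value to a later block, so one must verify that even the value produced by the deepest, slowest subtree is ready just in time — it is, but only barely, which both forces the ordering above and explains the slightly larger constants $3305$ and $31$ relative to Theorem~\ref{theor1}; everything else is routine bookkeeping with the Sakura node-cost formulas.
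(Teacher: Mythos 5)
Your proposal is correct and follows essentially the same route as the paper: the same construction (merge all the chaining hops of each node of the Theorem~\ref{theor1} tree into one end-compacted chaining hop while keeping node lengths), the same node-size bookkeeping ($64k+983$ message bits for an inner root node of length $1088k$, $64t+984$ for the final one), and the same resulting threshold $3305\cdot 3^{t-2}-31\ge n$; you merely reorganize the paper's explicit geometric-series summation into a per-ternary-hop gain of $64$ bits (equivalently the recursion $G(t)=3G(t-1)+62$). The only substantive additions are your block-index verification of the happens-before condition and your use of $\lceil (n+31)/3305\rceil$ leaf subtrees rather than a perfect ternary tree (which is what actually justifies the stated processor count); the paper's proof only counts bits and leaves both points implicit, so this is extra rigor rather than a different method.
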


\begin{proof}
Following the guidelines above, we have to count the number of message bits encapsulated in the subtrees of height 1: one of them can encapsulate
$2163+1088(j+1) -1024j -41$ message bits. Two of them can each encapsulate $2162+1088j-1024(j-1)-41$ message bits. Six of them can each encapsulate
$2162+1088(j-1) -1024(j-2)-41$ message bits. And so on... $2 \times 3^{j-2}$ of them can each encapsulate $2162 + 2\cdot 1088 - 1024 -41$ message bits.
We want to minimize $j-1$ subject to the constraint that the sum of all these quantities is greater than or equal to $n$. In other words, we seek 
to minimize $j-1$ such that:
$$3^{j-1}3209 + 64j + 128(j\sum_{k=0}^{j-2}3^k - \sum_ {k=0}^{j-2}3^k - \sum_{k=0}^{j-2}k3^k) \geq n-1.$$
We remark that $\sum_{k=0}^{j-2}k3^k=\frac{3^{j-1}(2j-5)+3}{4}$. Consequently, this sum can be simplified to $3^{j-1}3305-32$, yielding the expected result.
\end{proof}

\paragraph{\textbf{Relaxation of the first guideline.}} By respecting the first guideline above, the processors that are responsible for computing nodes having only
a message hop can sit idle for a certain period of time while their result is unused. Let us suppose that such a message hop is the second (or third) child of 
a chaining hop that is contained in a node of length $1088(j+1)$. If we follow the first guideline, the first child hop contains $1088(j+1) - 1024j -41$ message bits,
while the second (or third) child contains $1081$ message bits. It appears that the processor which processes this latter sit idle during 
$\left\lfloor \frac{64j}{1088} \right\rfloor$ units of time. Thus, this message hop
could contain $\left\lfloor \frac{64j}{1088} \right\rfloor1088$ more message bits. In fact, $2 \cdot 3^{j-2}$ nodes can possibly contain more message bits:
two children of the final node can each contain $\left\lfloor 64j/1088 \right\rfloor1088$ more message bits. $2^2$ nodes can each contain 
$\left\lfloor 64(j-1)/1088 \right\rfloor1088$ more message bits. $2^2\cdot 3$ nodes can each contain 
$\left\lfloor 64(j-2)/1088 \right\rfloor1088$ more message bits. And so on... $2^2\cdot 3^{j-3}$ nodes can each contain 
$\left\lfloor \frac{64 \cdot 2}{1088} \right\rfloor1088$ more message bits.
The objective is then to minimize $j-1$ subject to the following constraint:
$$3^{j-1}3305 - 32 +  2\left\lfloor \frac{64j}{1088} \right\rfloor1088 + 2^2 \sum_{k=0}^{j-3} 3^k\left\lfloor \frac{64(j-(k+1))}{1088} \right\rfloor1088  \geq n-1.$$

\paragraph{Remark.} The aforementioned strategy can be applied when the hop subtrees of height 1 follow a model $k \neq 2$. This leads to 
different results and then we have to choose the one which optimizes both the running time and the number of involved processors.

\begin{figure}[!h]
 \centering
 \scalebox{0.28}{
 \input{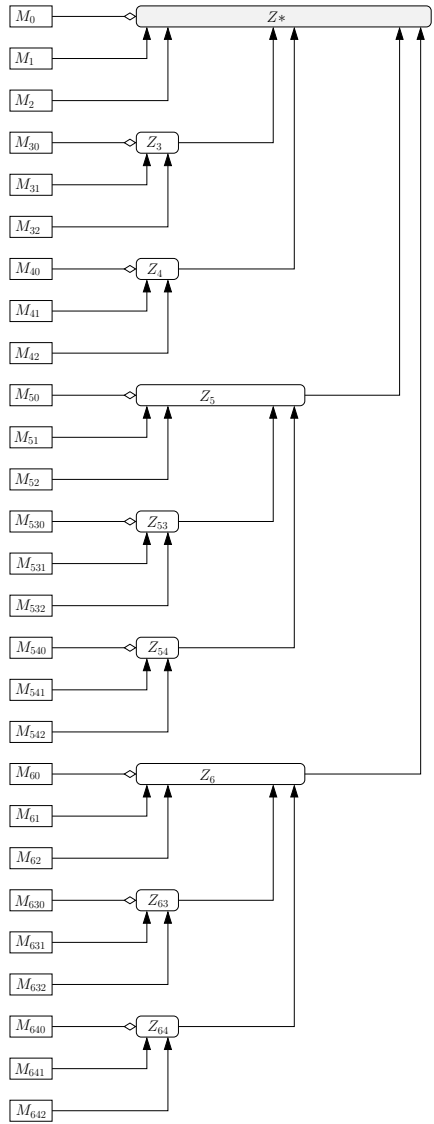}}
 \caption{\textbf{Ternary hop tree with at most one chaining hop per node, encapsulating a message of 29457 bits.} 
 }
 \label{ternaly_hop_tree_2}
\end{figure}

\section{Further remarks}

We have seen how to optimize a tree of nodes using Sakura by taking the example of SHAKE256. 
In order to construct a parallel SHAKE128 function, we need to use RawSHAKE128 as inner function. In this case, the capacity is smaller (256 bits) and the rate 
larger (1344 bits). 
If we consider chaining values to be equal to the capacity, 
then 5 chaining values can fit into the rate width.
All our results need to be recomputed in the case of SHAKE128.


\bibliographystyle{plain}
\bibliography{trees} 

\end{document}